\numberwithin{equation}{section}
\begin{document}

\newcommand{\Vast}{\bBigg@{5}}
\newtheorem*{theorem}{Theorem}
\newtheorem{conjecture}{Conjecture}[section]
\newtheorem{corollary}{Corollary}
\newtheorem{lemma}{Lemma}
\newtheorem*{remark}{Remark}
\newtheorem{claim}{Claim}
\newtheorem{prop}{Proposition}
\theoremstyle{definition}
\newtheorem{definition}{Definition}[section]
\renewcommand\qedsymbol{$\blacksquare$}

\newcommand{\be}{\begin{equation}}
\newcommand{\ee}{\end{equation}}
\newcommand{\bea}{\begin{eqnarray}}
\newcommand{\eea}{\end{eqnarray}}
\newcommand{\bb}{\mathbb}
\newcommand{\mrm}{\mathrm}
\newcommand{\scr}{\mathscr}
\newcommand{\p}{\partial}
\def\e{{\rm e}}
\def\bz{{\bar z}}
\def\bw{{\bar w}}
\def\p{{$|\Phi\rangle$}}
\def\pp{{$|\Phi^\prime\rangle$}}
\def\cO{{\mathcal O}}
\def\cH{\mathcal{H}}
\def\cF{\mathcal{F}}
\def\cL{\mathcal{L}}

\newcommand{\CC}{\mathbb{C}} 
\newcommand{\ZZ}{\mathbb{Z}} 
\newcommand{\NN}{\mathbb{N}} 

\newcommand{\T}{R}

\newcommand{\lcm}{\mathrm{lcm}}

\newcommand{\ie}{{\it i.e.~}}
\def\eg{{\it e.g.~}}

\newcommand{\comment}[1]{{\bf\color{blue}[#1]}}

\begin{titlepage}
\begin{center}

\hfill \\
\hfill \\
\vskip 0.75in

{\Large \bf The Spectrum of Permutation Orbifolds}\\

\vskip 0.4in

{\large Christoph A.~Keller${}^{a}$ and Beatrix J. M\"uhlmann$^{b}$
}\\
\vskip 4mm

${}^{a}$
{\it Department of Mathematics, ETH Zurich, CH-8092 Zurich, Switzerland} \vskip 1mm
${}^{b}$
{\it Department of Physics, ETH Zurich, CH-8092 Zurich, Switzerland} \vskip 1mm

\texttt{christoph.keller@math.ethz.ch, b.muhlmann@uva.nl}

\end{center}

\vskip 0.35in

\begin{center} {\bf ABSTRACT } \end{center}
We study the spectrum of permutation orbifolds of 2d CFTs. We find examples where the light spectrum grows faster than Hagedorn, which is different from known cases such as symmetric orbifolds.
We also describe how to compute their partition functions using a generalization of Hecke operators.

\vfill

\noindent \today

\end{titlepage}

\setcounter{tocdepth}{2}

\section{Introduction}

In the context of the $AdS_3/CFT_2$ correspondence, one is interested in families of 2d CFTs with a large central charge limit. To construct explicit examples of such families is surprisingly hard, since generically the number of light states of a family will diverge in the large central charge limit. The best known example with finite spectrum are symmetric orbifold theories \cite{Strominger:1996sh}, that is CFTs whose tensor product one orbifolds by the symmetric group $S$. In that case the growth of light states is given by \cite{Keller:2011xi,Hartman:2014oaa}
\be\label{symgrowth}
\rho_S(\Delta) \approx e^{2\pi \Delta}\ .
\ee
The exponential growth indicates that we are in a stringy regime with Hagedorn growth. From holography one expects that there should be many examples with supergravity growth $\rho(\Delta)\approx e^{\sqrt{a\Delta}}$. However, no such CFTs have been constructed explicitly. 
The goal of this note is to find theories with growth behavior different from (\ref{symgrowth}), which can then be interpreted as describing different physics. In particular we find an example whose growth is super-Hagedorn, reminiscent of the entropy of black holes in flat space for instance. To our knowledge this is the first explicit such example.

To achieve this we consider permutation orbifolds \cite{Haehl:2014yla,Belin:2014fna,Belin:2015hwa}:
We start out with a modular invariant partition function of a seed theory
\be
 Z(\tau) = \sum_{\Delta\in \ZZ_{\geq0}} \rho(\Delta) q^{\Delta-c/24}\ ,\qquad q= e^{2\pi i\tau}\ ,
\ee
with $\rho(\Delta)\in \ZZ_{\geq0}$.
For simplicity we are assuming here a holomorphic theory, but it is straightforward to generalize our results to non-holomorphic theories.
Let $G_N < S_N$ be a subgroup of the symmetric group acting on the set $\{1,\ldots,N\}$ in the standard way. 
Our starting point is then the expression for the partition function of permutation orbifolds given in
\cite{Bantay:1997ek},
\be\label{Bantay}
Z_{G_N}(\tau) = \frac{1}{|G_N|}\sum_{hg=gh} Z_{(h,g)}(\tau)
\ee
The sum here is over all $g,h\in G_N$ which commute. We can think of the sum over $g$ as labelling the twisted sector states and the sum over $h$ as projecting onto the $G_N$ invariant states in a given twisted sector.
The functions $Z_{(h,g)}$ are given by the following prescription:
A pair of commuting elements $g,h$ generate an Abelian subgroup of $S_N$,
which of course acts on the set $\{1,\ldots,N\}$ by permutation
of the elements.
We denote by $O(h,g)$ the set of orbits of this action. For each orbit $\xi \in O(h,g)$
we define the modified modulus $\tau_\xi$ as follows:
First, let $\lambda_\xi$ be the size of the $g$
orbit in $\xi$, and $\mu_\xi$ the number of $g$ orbits in $\xi$,
so that $\lambda_\xi \mu_\xi =|\xi|$.
Let $\kappa_\xi$ be the smallest non-negative integer
such that $h^{\mu_\xi} g^{-\kappa_\xi}$ is in
the stabilizer of $\xi$.
Then 
\be\label{Ztauchi}
Z_{(h,g)}(\tau)= \prod_{\xi\in O(h,g)}Z(\tau_\xi) \qquad \textrm{with}\quad
\tau_\xi = \frac{\mu_\xi\tau + \kappa_\xi}{\lambda_\xi}\ .
\ee
As we are interested in the limit $N\to\infty$, it is often more useful to shift the partition function such that the leading term is 1 rather than $q^{-cN/24}$, giving $\tilde Z_{G_N}(\tau):=Z_{G_N}(\tau)q^{cN/24}$.

The \emph{untwisted sector} partition function $Z^u_{G_N}(\tau)$ is given by the sum over terms with $g=1$. We will use the fact that its coefficients $\rho^u_{G_N}(\Delta)$ give a lower bound for $\rho_{G_N}(\Delta)$.
It is straightforward to check that $Z^u_{G_N}$ can be expressed in terms of the cycle index of $G_N$ as 
\be
Z^u_{G_N}=\chi(G_N; Z(\tau),\dots, Z(N\tau))\ .
\ee
We will consider the limit $N\to\infty$. For general families $G_N$ the $\rho_{G_N}$ will not converge. A necessary and sufficient condition for convergence is that $G_N$ be \emph{oligomorphic} \cite{MR1066691,Haehl:2014yla,Belin:2015hwa}.
We will consider two such families, or more precisely, two types of action \cite{MR2378039}: The direct product action $S_{\sqrt{N}}\times S_{\sqrt{N}}$ and the wreath product action $S_{\sqrt{N}}\wr S_{\sqrt{N}}$. We can of course iterate this $d$ times. Note that for convenience we will choose the order of the groups in such a way that their iterated products always act on $N$ elements. 

Our main result is that we give a lower bound for the number of states of such orbifold theories in the limit $N\to\infty$. For a function $g(z)$ we denote by $g^d(z):= \underbrace{g\circ g\circ \cdots \circ g}_d(z)$, that is $g$ iterated $d$ times. Similarly we denote by $S_\times^d$ and $S^d_\wr$ the iterated direct and wreath product actions respectively of the infinite permutation groups $S$. 
We define $a_n \approx b_n$ to mean $\lim_{n\to\infty} \frac{\log a_n}{\log b_n}=1$, and similar for $a_n \gtrsim b_n$.
For the wreath product we find 
\begin{prop}\label{wreath}
\be
\rho_{S^d_\wr}(\Delta) \gtrsim e^{b\Delta/ \log^d(b\Delta) }
\ee
where $b$ is a positive constant given by $b= \pi^2c/6$, with $c$ the central charge of the seed theory.
\end{prop}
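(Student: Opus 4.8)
The plan is to bound $\rho_{S^d_\wr}$ from below by its untwisted sector and to read off the growth of that sector from the wreath‑product structure of the cycle index. Since $\rho_{S^d_\wr}(\Delta)\ge\rho^u_{S^d_\wr}(\Delta)$, it suffices to control the untwisted coefficients. First I would use the composition law for cycle indices of wreath products --- the cycle index of $G\wr H$ is obtained from that of $H$ by the plethystic substitution $p_k\mapsto\chi(G;p_k,p_{2k},\dots)$ --- together with the fact that the untwisted sector of a $G$‑orbifold of $X^{\otimes N}$ is the space of $G$‑invariants in $\mathcal H_X^{\otimes N}$, so that an iterated wreath product produces an iterated symmetric power. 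This gives, in the limit $N\to\infty$,
\[
\tilde Z^u_{S^d_\wr}(\tau)=F_d(\tau),\qquad F_0(\tau)=q^{c/24}Z(\tau),\qquad F_{k+1}(\tau)=\prod_{n\ge 1}\bigl(1-q^n\bigr)^{-[F_k]_n},
\]
with $[F_k]_n$ the coefficient of $q^n$ in $F_k$; all $q$‑coefficients are manifestly non‑negative, so $F_{k+1}$ is simply the graded dimension of $\mathrm{Sym}^\infty$ applied to a space with graded dimension $F_k$.

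Next I would study $F_d$ as $\tau\to 0$. Setting $q=e^{-\beta}$, modular invariance and holomorphy of the seed give $F_0(\tau)=q^{c/24}Z(-1/\tau)\sim e^{b/\beta}$ as $\beta\to0^+$, with $b=\pi^2 c/6$; only the lower bound $\log F_0(\tau)\ge(1-o(1))\,b/\beta$ is needed. From $-\log(1-x)\ge x$ on $(0,1)$,
\[
\log F_{k+1}(\tau)=\sum_{n\ge 1}[F_k]_n\bigl(-\log(1-q^n)\bigr)\ \ge\ \sum_{n\ge 1}[F_k]_n\,q^n\ =\ F_k(\tau)-1 ,
\]
so iterating $d$ times yields $\log F_d(\tau)\gtrsim\exp^d(b/\beta)$ as $\beta\to0^+$, where $\exp^d$ is $\exp$ iterated $d$ times in the notation $g^d$ above; the $-1$'s and the $o(1)$ are harmless because they shift only the innermost argument by $O(1)$, i.e.\ by a bounded multiplicative factor that is washed out by the outer exponentials.

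The last step is to convert this small‑$\tau$ growth into a pointwise lower bound on $\rho^u_{S^d_\wr}(\Delta)=[q^\Delta]F_d$. A saddle‑point/Tauberian estimate gives $\log\rho^u_{S^d_\wr}(\Delta)\gtrsim\min_{\beta>0}\bigl(\exp^d(b/\beta)+\beta\Delta\bigr)$; the minimum sits at $\beta\sim b/\log^d(b\Delta)$ --- precisely where $\exp^d(b/\beta)$ has dropped to order $b\Delta$ divided by iterated‑logarithmic factors and is therefore negligible next to the linear term --- and there the bracket equals $(1+o(1))\,b\Delta/\log^d(b\Delta)$, giving $\rho_{S^d_\wr}(\Delta)\ge\rho^u_{S^d_\wr}(\Delta)\gtrsim e^{b\Delta/\log^d(b\Delta)}$. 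Alternatively I would avoid the Tauberian step and induct on $d$ directly on coefficients: retaining only states built from a single ``species'' of $F_{d-1}$ gives $[q^{km}]F_d\ge\binom{k+[F_{d-1}]_m-1}{k}$ for all $m,k$, so (with $k\sim N/m$ and a bounded‑energy filler for the remainder) $\log[q^N]F_d\gtrsim\tfrac{N}{m}\bigl(\log[F_{d-1}]_m-\log\tfrac{N}{m}\bigr)$; optimising over $m$ --- with the inductive input $\log[F_{d-1}]_m\gtrsim bm/\log^{d-1}(bm)$, the base case $d=1$ coming from the Cardy estimate $\log[F_0]_m\sim 2\sqrt{bm}$ --- produces the iterated logarithm $\log^d(bN)$ with the sharp constant $b$.

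The hard part is exactly this last step: passing from the tower‑of‑exponentials growth of $F_d(\tau)$ to a lower bound on an individual Fourier coefficient with the correct exponent. A crude tail estimate only pins down the concentration of $\sum_\Delta\rho^u(\Delta)e^{-\beta\Delta}$ after doubling the temperature, which is too lossy and costs a logarithm in the exponent, so one needs either a Meinardus‑type analysis of the product $F_d=\prod_n(1-q^n)^{-c_n}$ with the rapidly growing exponents $c_n=[F_{d-1}]_n$, or the explicit single‑species induction above with its nested optimisation carried out carefully. Two minor points also need attention: I would want (near‑)monotonicity of $\rho^u_{S^d_\wr}(\Delta)$ in $\Delta$ so that the bound holds at every level --- this follows because $F_{d-1}$ has non‑zero coefficients at all large $n$ (Cardy growth and positivity of the seed), so multiplying by a fixed low‑lying state forces $\rho^u(\Delta+1)\ge\rho^u(\Delta)$ eventually --- and I would check that the $o(1)$'s and polynomial‑in‑$(1/\beta)$ corrections accumulated over the $d$ iterations do not survive the $d$ nested logarithms, so that the exponent is $b\Delta/\log^d(b\Delta)$ to leading order.
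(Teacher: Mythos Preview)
Your outline is essentially correct, but the route is genuinely different from the paper's.

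The paper does not iterate $\mathrm{Sym}^\infty$ on the partition function at all. Instead it works combinatorially with the orbit numbers $F_K(S^d_\wr)$: from the cycle index it shows that the coefficient of $q^\Delta$ in $\tilde Z^u_{S^d_\wr}$ is bounded below by $\frac{F_K}{K!}\,\rho_{\mathrm{seed}}(\Delta/K)^K$ for any $K$, where $\rho_{\mathrm{seed}}$ is the density of the seed. The asymptotics of $F_K(S^d_\wr)$ are obtained from the closed form $F_{S^d_\wr}(z)=\exp\bigl(g^{d-1}(z)\bigr)$ with $g(z)=e^z-1$ via Hayman's saddle--point theorem, giving $\log F_K\simeq K\log K-K-K\log^{d}K$. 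Plugging in Cardy for $\rho_{\mathrm{seed}}(\Delta/K)$ and optimising over the single integer $K$ (the optimum is $K^\star\sim b\Delta/(\log^d(b\Delta))^2$) yields the bound directly on the coefficient, with no Tauberian step.

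Your approach --- $\log F_{k+1}\ge F_k-1$ hence $\log F_d\gtrsim\exp^d(b/\beta)$, then a saddle/Tauberian conversion --- is conceptually clean and would generalise to other iterated constructions, but as you note the passage from the partition--function lower bound to a pointwise coefficient lower bound is the delicate point; for towers of exponentials a naive tail bound loses exactly the logarithm you need. Your ``single--species'' alternative is much closer in spirit to what the paper does: both pick $K$ copies of a fixed state and count, the difference being that the paper does this in \emph{one} step at the level of the seed (using the full wreath--product orbit count $F_K$), whereas your induction does it in $d$ nested steps, each a symmetric--group count. The paper's one--shot version is shorter precisely because Hayman gives $F_K(S^d_\wr)$ in closed asymptotic form, so no induction or Tauberian argument is needed.
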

For the direct product action we find super-Hagedorn growth:
\begin{prop}\label{direct}
\be
\rho_{S^d_\times}(\Delta) \gtrsim e^{(d-1)\frac{\Delta}{\Delta_1}{\log (\Delta/\Delta_1)}}
\ee
where $\Delta_1$ is a positive constant given by the weight of the lightest state in the theory.
\end{prop}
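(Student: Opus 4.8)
The plan is to obtain the lower bound from the untwisted sector alone, using the fact stated in the excerpt that $\rho^u_{G_N}(\Delta) \leq \rho_{G_N}(\Delta)$, and then to combine this with a counting of states built out of many decoupled copies of the lightest seed state. First I would recall that the direct product action $S_{\sqrt N}\times S_{\sqrt N}$ on $N=\sqrt N\cdot\sqrt N$ points has orbits indexed by pairs, so the untwisted sector contains, among others, states of the schematic form $\bigotimes_{(i,j)\in T}\phi_1^{(i,j)}$ where $\phi_1$ is the lightest nontrivial primary (weight $\Delta_1$) and $T$ is a subset of the $\sqrt N\times\sqrt N$ grid that is invariant under $S_{\sqrt N}\times S_{\sqrt N}$. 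The number of such invariant subsets of a fixed size $k\Delta_1/\Delta_1=k$ (i.e. using $k$ copies of $\phi_1$, total weight $k\Delta_1$) is, up to the usual subtleties about which $S_{\sqrt N}\times S_{\sqrt N}$-orbits of subsets survive, controlled by the number of ways to distribute $k$ marked cells into a grid modulo permuting rows and columns; as $\sqrt N\to\infty$ this stabilizes to the number of $0/1$ matrices with $k$ ones up to row and column permutations, equivalently the number of bipartite graphs with $k$ edges up to isomorphism on each side. I would then iterate: for $S^d_\times$ the relevant combinatorial object is a $d$-dimensional $0/1$ array with $k$ ones, counted modulo independent permutations of each of the $d$ axes.

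The key estimate is therefore purely enumerative: let $f_d(k)$ be the number of such $d$-dimensional arrays with $k$ ones up to the $S_\infty^{\times d}$ action. I would show $\log f_d(k)\sim (d-1)\,k\log k$ as $k\to\infty$. The upper bound $f_d(k)\le$ (number of arbitrary placements)/(symmetry) is easy heuristically but the symmetry group is infinite, so instead I would argue directly: encode an array by, for each of its $k$ ones, its coordinate vector in $\{1,\dots,M\}^d$ for $M$ growing slowly with $k$, then quotient by $(S_M)^d$; choosing $M=k$ one gets at most $k^{dk}$ placements and divides by roughly $(k!)^d\sim k^{dk}/e^{dk}$ worth of symmetry in the generic (free) case, but crucially one axis's worth of labels is "used up" fixing the ambient position, leaving the net count $\approx k^{(d-1)k}=e^{(d-1)k\log k}$. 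For the matching lower bound I would exhibit $e^{(d-1)k\log k}$ many provably inequivalent arrays, e.g. by taking arrays supported on a single "line" in $d-1$ of the directions so that the classification reduces to a multiset/partition count in the remaining structure whose log is asymptotically $(d-1)k\log k$. Translating $k\Delta_1=\Delta$ gives $k=\Delta/\Delta_1$ and hence the claimed $e^{(d-1)\frac{\Delta}{\Delta_1}\log(\Delta/\Delta_1)}$.

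Two points need care. The first is the passage to the $N\to\infty$ limit: I would invoke oligomorphicity of the direct product action (cited in the excerpt) to guarantee that $\rho^u_{S_{\sqrt N}^{\times d}}(\Delta)$ is eventually monotone increasing in $N$ and converges, so that the bound obtained at finite but large $N$ persists in the limit; concretely, for fixed $\Delta$ only finitely many copies of $\phi_1$ participate, so $\sqrt[\,d\,]{N}\gg \Delta/\Delta_1$ suffices. The second, and the main obstacle, is making the combinatorial asymptotics $\log f_d(k)\sim(d-1)k\log k$ genuinely rigorous, in particular the lower bound: one must produce an explicitly inequivalent family of arrays and verify inequivalence under the full iterated permutation action, which is where the factor $(d-1)$ rather than $d$ truly comes from — intuitively because the "outermost" index set can always be relabelled to normalize one coordinate, costing one factor of $k\log k$. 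I expect the cleanest route is an inductive argument on $d$: given a near-extremal family in dimension $d-1$, stack $k$ disjoint translated copies along the new axis indexed by an unordered structure, showing the stacking multiplies the count by a further $\sim e^{k\log k}$ while destroying no distinctions, and that no construction can do better by bounding $f_d(k)$ above via a "forget the last coordinate" map whose fibers have size at most the number of set partitions of $[k]$, i.e. $e^{k\log k(1+o(1))}$.
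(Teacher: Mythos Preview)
Your high-level strategy agrees with the paper: bound $\rho$ from below by the untwisted sector, place a single copy of the lightest seed state $\phi_1$ at each point of a $K$-element subset of the index set, and count orbits of such subsets at weight $\Delta=K\Delta_1$. Where you diverge is in the enumerative core. The paper does not attempt to count the unordered-subset orbits $f_K(S^d_\times)$ directly. Instead, from the cycle-index identity $F_G(t)=\chi_G(1+t,1,\dots,1)$ it reads off that the coefficient of $q^\Delta$ coming from this construction is $\tfrac{F_K}{K!}\,\rho(\Delta_1)^K\ge \tfrac{F_K}{K!}$, and then determines $F_K(S^d_\times)$ by a completely algebraic route: the multiplicativity $F^\star_K(G\times H)=F^\star_K(G)F^\star_K(H)$ gives $F^\star_K(S^d_\times)=B_K^d$ (Bell numbers), the inverse Stirling transform turns this into $F_K(S^d_\times)=\sum_n S_1(K,n)B_n^d$, and the asymptotics $F_K(S^d_\times)\approx e^{dK\log K}$ are then read off from a theorem of Pittel. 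Dividing by $K!$ produces the factor $e^{(d-1)K\log K}$. This avoids any geometric reasoning about $0/1$ arrays.

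Your route via $f_d(k)$ is in principle at least as strong, since $f_K\ge F_K/K!$; but the actual combinatorial bounds you sketch are where the argument is thin. The upper-bound heuristic (``$k^{dk}$ placements divided by roughly $(k!)^d$ symmetry, with one axis used up'') is not a proof, and in any case is not needed for the proposition, which is only a lower bound. The lower-bound construction is the crucial step, and ``arrays supported on a single line in $d-1$ of the directions so that the classification reduces to a multiset/partition count'' does not, as stated, produce $e^{(d-1)k\log k}$ inequivalent configurations: fixing $d-1$ coordinates collapses the count to essentially one orbit, and the inductive stacking argument you outline needs a genuine invariant to distinguish the stacked copies after quotienting by the new $S_\infty$. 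Making this rigorous is essentially reproving (the lower-bound half of) the Pittel-type asymptotics by hand. The paper's generating-function route sidesteps this entirely by reducing to known Bell-number asymptotics, which is both shorter and already rigorous.
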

These propositions are proven in section~\ref{ss:ut}.

\section{Combinatorics}

\subsection{Orbits}
Given two permutation groups $G_1, G_2$ acting on $X_1,X_2$ respectively, we can define the \emph{direct action} of the direct product $G_1 \times G_2$ on $X_1 \times X_2$ by $(g_1,g_2)\cdot(x_1,x_2)=(g_1\cdot x_1,g_2\cdot x_2)$ and the \emph{imprimitive action} of the wreath product $G_1\wr G_2$ on $X_1\times X_2$ by $(f(x_2)\cdot x_1,x_2)$ for $f\in G_1^{X_2}$ and $(x_1,g\cdot x_2)$ for $g\in G_2$ \cite{MR2378039}. In what follows we will simply call these actions the direct product and the wreath product.

For a permutation group $G$ acting on the set $X$ we define the following numbers:
\begin{itemize}
	\item $f_K(G)$: the number of orbits of $G$ on the set of K-element subsets of $X$
	\item $F_K(G)$: the number of orbits of $G$ on the set of ordered K-tuples of distinct elements of $X$
	\item $F^{\star}_K(G)$: the number of orbits of $G$ on the set of all ordered K-tuples of elements of $X$
\end{itemize} 
By convention $f_0(G) =  F_0(G) = F^{\star}_0(G) = 1$. Moreover we have the elementary inequalities
\begin{equation} 
f_K \leq F_K \leq K!f_K \label{eq:approx} 
\end{equation}
It will be useful to consider infinite permutation groups. We call such a group $\emph{oligomorphic}$ if it has a finite number of orbits for all $K$, \ie $F_K < \infty$ for all $K$. 
An example is $S$, the permutation group of a countable set of elements $X$, for which $f_K=F_K=1$. We will also be interested in families of permutation groups $\{G_N\}_{N\in \mathbb{N}}$. We define:
\begin{definition}
	A family of permutation groups $G_N$ is called oligomorphic if the $F_K(G_N)$ converge pointwise, that is if
	\begin{equation}
	F_K(G_N)= F_K \quad N\;\textit{large enough}
	\end{equation}
\end{definition}
The point of this definition is that for an oligomorphic family the $N\to\infty$ limit of (\ref{Bantay}) is well-defined \cite{Belin:2014fna}.
An example is of course the family $S_N$, for which we have $F_K(S_N)\rightarrow F_K(S)$. Similar statements hold for the wreath product and the direct product action. In practice this means that we can compute the $F_K(S_N)$ from $F_K(S)$ as long as we choose $N$ much bigger than $K$.

\subsection{Cycle index}

\begin{definition}
Let $G_N$ be a permutation group on $N$ elements. For $\sigma \in G_N$, denote the number of cycles of length $k$, $1\leq k\leq N$ in the cycle decomposition of $\sigma$ by $m_k(\sigma)$. Then the cycle index of $G_N$ is the following polynomial in the variables $s_1,s_2,....,s_N$:
\begin{equation}
\chi_{G_N}(s_1,s_2,s_3,....,s_N)= \frac{1}{|G_N|}\bigg( \sum_{\sigma\in G_N} s_1^{m_1(\sigma)}s_2^{m_2(\sigma)}s_3^{m_3(\sigma)}\cdots s_N^{m_N(\sigma)}\bigg)
\end{equation} 
\end{definition}
The cycle indices of some groups are well known and can be found e.g in \cite{MR1311922} 
\begin{align}
&\mathrm{Cyclic \; Group \; }C_N: \quad \chi_{C_N}= \frac{1}{N}\sum_{d|N}\Phi(d)s_d^{N/d} \\ 
&\mathrm{Dihedral \; Group \;} D_N: \quad \chi_{D_N} = \chi_{C_N}\; + \; \begin{cases} s_1s_2^{(N-1)/2} \quad N\;\mathrm{ odd}, \\ (s_2^{N/2} + s_1^2s_2^{(N-2)/2})/2 \quad &N\;\mathrm{ even}\end{cases}\\
&\mathrm{Symmetric \; Group \;} S_N: \quad \chi_{S_N}= \frac{1}{N!}\sum_{\lambda \vdash N}\Bigg(\frac{N!}{\prod_{i=1}^N i^{m_i(\lambda)}m_i(\lambda)!}\Bigg)\cdot\prod_i s_i^{m_i(\lambda)} \\
&\mathrm{Alternating \; Group \;} A_N: \quad \chi_{A_N}= \frac{1}{N!}\sum_{\lambda \vdash N}\Bigg(\frac{N!\big(1 + (-1)^{(m_2(\lambda) + m_4(\lambda) + ... )}\big)}{\prod_{i=1}^N i^{m_i(\lambda)}m_i(\lambda)!}\Bigg)\cdot\prod_i s_i^{m_i(\lambda)}  
\end{align}
Moreover for the imprimitive and the direct product actions we have \cite{MR1066691}
\begin{align}
&\mathrm{Wreath \; Product \;} G\wr H: \quad\chi_{G \wr H} = \chi_H(\chi_G(s_1,s_2,s_3,...),\chi_G(s_2,s_4,s_6,...),...)\\
&\mathrm{Direct \; Product \;} G\times H: \quad\chi_{G \times H} = \chi_G \circ \chi_G \label{dPcycle}
\end{align}
where we define $s_i \circ s_j:=(s_{\lcm(i,j)})^{\gcd(i,j)}$ and extend it to monomials and polynomials.
Here $\Phi(d)$ is the Euler totient function.
We can express the generating functions of $f_K(G),F_K(G)$ and $F^\star_K(G)$ in terms of the cycle index of $G$\cite{MR1311922}:
\begin{align}\label{fgen}
f_{G_N}(t):=& \sum_{k=0}^Nf_kt^k=\chi_{G_N}(1+t,1+t^2,1+t^3,....,1+t^{N}) \\
F_{G_N}(t):=& \sum_{k=0}^N\frac{F_kt^k}{k!}=\chi_{G_N}(1+t,1,1,1,....,1)\label{Fgen} \\
F^\star_{G_N}(t):=& \sum_{k=0}^N\frac{F^\star_kt^k}{k!}=
\chi_{G_N}(e^t,1,1,1,....,1)  \label{Fstargen}
\end{align}
from which follows the identity
\be\label{Fstar}
F^\star_G(t)= F_G(e^t-1)\ .
\ee
From this one can derive the identity
\begin{equation} F^{\star}_K = \sum_{n=0}^KS_2(K,n)F_n\ . \label{eq:stirling} \end{equation}
Here $S_2(K,n)$ are the Stirling numbers of the second kind, and we say $F^{\star}_K$ is the Stirling transform of $F_K$. We can invert this using the inverse Stirling transform 
\begin{equation} 
F_K = \sum_{n=0}^KS_1(K,n)F^{\star}_n\ , \label{eq:invstirling} 
\end{equation}
where now the $S_1(K,n)$ are the Stirling numbers of the first kind.
The Stirling numbers of the second kind $S_2(n,k)$ are given by the number of ways of partitioning a set of $n$ elements into $k$ nonempty sets, so they are clearly non-negative integers. The Stirling numbers $S_1(n,k)$ of the first kind then are integers with sign $(-1)^{n-k}$.
Finally we have \cite{MR2378039}
\be\label{Fstardirect}
F_K^{\star}(G\times H)= F_K^{\star}(G)F_K^{\star}(H)\ .
\ee

\subsection{Wreath product}
Let us now estimate the growth of $F_K$ for the wreath product. For this we establish the following Lemma:
\begin{lemma}\label{Fdirect}
	Let $g(z):= e^z-1$. We then have 
	\begin{align}
	&a) \quad
	F_{S^d_\wr}(t)=
	\sum_{K=0}^\infty \frac{F_K^{S^d_{\wr}}}{K!}z^K=
	e^{g^{d-1}(z)} 
	\\
	&b) \quad \log(F^{S^{d}_\wr}_K) \simeq K\log(K)- K - K\log^{d}(K)
	\end{align}
\end{lemma}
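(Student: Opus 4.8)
The plan is to prove part (a) by induction on $d$ from the cycle-index formula for wreath products, and to deduce part (b) from part (a) by a saddle-point estimate on the resulting entire function.

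\textbf{Part (a).} The base case $d=1$ is $S$ itself, which acts transitively on $K$-tuples of distinct elements, so $F_K^S=1$ for all $K$ and \eqref{Fgen} gives $F_S(t)=\sum_{K\ge0}t^K/K!=e^t=e^{g^0(t)}$. For the inductive step I would first record the elementary identity
\be
F_{G\wr S}(t)=\exp\!\big(F_G(t)-1\big).
\ee
This follows by evaluating the wreath-product cycle index $\chi_{G\wr S}=\chi_S\big(\chi_G(s_1,s_2,\dots),\chi_G(s_2,s_4,\dots),\dots\big)$ at $s_1=1+t$, $s_i=1$ for $i\ge2$: the first argument becomes $\chi_G(1+t,1,1,\dots)=F_G(t)$ by \eqref{Fgen}, every other argument becomes $\chi_G(1,1,\dots)=1$, and $\chi_S(x,1,1,\dots)=e^{x-1}$ by \eqref{Fstargen}, \eqref{Fstar} and $F_S(t)=e^t$ (this is the Bell-number generating function). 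Writing $S^{d+1}_\wr=S^d_\wr\wr S$ and using the inductive hypothesis, $F_{S^{d+1}_\wr}(t)=\exp(e^{g^{d-1}(t)}-1)=e^{g(g^{d-1}(t))}=e^{g^d(t)}$. (The opposite nesting $S\wr S^d_\wr$ gives the same answer, since there $F_{S\wr G}(t)=F^\star_G(t)=F_G(g(t))$ by \eqref{Fstar}.)

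\textbf{Part (b).} Take $d\ge2$; for $d=1$ one has $F_K^{S^1_\wr}=1$ and the formula degenerates. By part (a), $F_K^{S^d_\wr}=K!\,[t^K]f_d(t)$ with $f_d:=e^{g^{d-1}}$ entire and with non-negative Taylor coefficients, growing like a $(d-1)$-fold iterated exponential; one checks (again by induction on $d$) that $f_d$ is Hayman-admissible, so the saddle-point method gives $\log[t^K]f_d(t)=g^{d-1}(\zeta_K)-K\log\zeta_K+o(K)$, where $\zeta_K>0$ solves $\zeta\,(g^{d-1})'(\zeta)=K$. With Stirling,
\be
\log F_K^{S^d_\wr}=K\log K-K+g^{d-1}(\zeta_K)-K\log\zeta_K+o(K).
\ee
To locate $\zeta_K$ I would induct on $d$. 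Since $(g^{d-1})'=e^{g^{d-2}}(g^{d-2})'$, one has $\log(g^{d-1})'(\zeta)=\sum_{j=0}^{d-2}g^j(\zeta)$, so taking the logarithm of the saddle equation yields $\log\zeta+\sum_{j=0}^{d-2}g^j(\zeta)=\log K$. The term $g^{d-2}(\zeta)$ dominates, whence $g^{d-2}(\zeta_K)=\log K-O(\log\log K)$; inverting with $g^{-1}(w)=\log(1+w)\sim\log w$, so that $(g^{d-2})^{-1}$ is $(d-2)$-fold iterated $\log$, gives $\zeta_K=\log^{d-1}(K)\,(1+o(1))$ and $\log\zeta_K=\log^d(K)+o(1)$. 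Finally $g^{d-1}(\zeta_K)=e^{g^{d-2}(\zeta_K)}-1=e^{\log K-O(\log\log K)}-1=o(K)$, so the two middle terms collapse to $-K\log^d(K)+o(K)$ and $\log F_K^{S^d_\wr}\simeq K\log K-K-K\log^d(K)$.

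\textbf{Where the work is.} The cycle-index bookkeeping and Stirling's formula are routine; the content is in part (b). The main obstacle is to control the errors in the saddle estimate and in the resolution $\zeta_K=\log^{d-1}(K)(1+o(1))$ sharply enough to isolate all three terms of the answer — note $K\log K\gg K\log^d(K)\gg K$ for $d\ge2$, so the $-K$ is the most delicate — and to push the two inductions (admissibility of $f_d$ and location of $\zeta_K$) through these estimates. If only the lower bound on $F_K^{S^d_\wr}$ is required, as in Proposition~\ref{wreath}, one can avoid the sharp saddle analysis: the exponential formula $F_K^{S^d_\wr}=\sum_{\pi\vdash[K]}\prod_{\beta\in\pi}F_{|\beta|}^{S^{d-1}_\wr}$, restricted to partitions into $\lfloor K/m_K\rfloor$ blocks of a near-optimal common size $m_K\approx\log^{d-1}(K)$, already gives the claimed lower bound, and the matching upper bound is the trivial $[t^K]f_d(t)\le f_d(r)/r^K$ optimized over $r>0$.
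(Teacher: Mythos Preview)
Your proposal is correct and follows essentially the same line as the paper. For part (a) the paper recurses via $F_{S\wr G}(t)=F_G(g(t))$ (using the identity $F_K(S\wr G)=F^\star_K(G)$), whereas you recurse via $F_{G\wr S}(t)=\exp(F_G(t)-1)$ from the cycle index; as you yourself note, both nestings give $e^{g^{d-1}(t)}$. For part (b) both you and the paper apply Hayman's saddle-point theorem to $f_d=e^{g^{d-1}}$, locate the saddle at $\zeta_K\simeq\log^{d-1}K$, and combine with Stirling; your treatment is in fact a bit more explicit about why the $f_d(\zeta_K)$ contribution is $o(K)$ than the paper's, which simply asserts that the leading term in Hayman's formula is $r_K^{-K}$.
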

Here we define $a_n \simeq b_n$ to mean that for $n\to\infty$, $a_n=b_n$ up to terms which grow slower than the slowest term written out explicitly in $a_n$ and $b_n$.
\begin{proof}
$a)$ 
This follows from the identity \cite{MR1311922}
\begin{align}
F_K(S \wr G) = F^{\ast}_K(G)\ ,
\end{align}
together with (\ref{Fstar}), which allows us to express the generating function recursively,
\be
F_{S\wr G}(t)= F^\star_G(t)= F_G(e^t-1)\ .
\ee
We have $F_K(S)=1$, so that for the $d$-fold wreath product $S^d_\wr$ we get indeed $a)$.
\\
$b)$ For the second part one needs the following theorem \cite{MR2172781}:
\begin{theorem}[Hayman]
	Let $f(z)=\sum a_Kz^K$ be an admissible function (for the case at hand this reduces to $f(z)$ being an entire function). Let $r_K$ be the positive real root of the equation $a(r_K)=K$, for each $K=1,2,...$ where:
	\begin{align}
	a(r)= r\cdot\frac{f'(r)}{f(r)} \\ 
	b(r)= r\cdot a'(r)
	\end{align}
	then 
	\begin{align}
	a_K\simeq \frac{f(r_K)}{r_K^K\sqrt{2\pi b(r_K)}} \label{Hayman1}
	\end{align}
\end{theorem}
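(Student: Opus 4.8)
The plan is to establish Hayman's formula by the \emph{saddle-point method} applied to Cauchy's integral representation of the Taylor coefficients. The starting point is
\be
a_K = \frac{1}{2\pi i}\oint_{|z|=r} \frac{f(z)}{z^{K+1}}\,dz = \frac{1}{2\pi r^K}\int_{-\pi}^{\pi} f(re^{i\theta})\,e^{-iK\theta}\,d\theta\ ,
\ee
where $r>0$ is for the moment arbitrary. The whole strategy is to pick $r=r_K$ so that the integrand concentrates sharply at $\theta=0$, and then to replace the concentrated peak by a Gaussian.

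First I would pin down the correct radius. Setting $u(\theta):=\log f(re^{i\theta})$ and differentiating through $w=re^{i\theta}$, one computes $u'(0)=i\,a(r)$ and $u''(0)=-b(r)$, with $a$ and $b$ exactly the functions in the statement. Hence the exponent $u(\theta)-iK\theta$ has a vanishing linear term in $\theta$ precisely when $a(r)=K$, \ie when $r=r_K$. This is the defining saddle-point equation, and it is what powers the method: at this radius
\be
\log f(r_K e^{i\theta})-iK\theta = \log f(r_K) - \tfrac12\,b(r_K)\,\theta^2 + O(\theta^3)\ .
\ee

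Next I would carry out the Gaussian approximation. Restricting to a window $|\theta|\le\delta$ and replacing the exponent by its quadratic part gives the central contribution
\be
\frac{f(r_K)}{2\pi r_K^K}\int_{-\delta}^{\delta} e^{-\frac12 b(r_K)\theta^2}\,d\theta \;\sim\; \frac{f(r_K)}{2\pi r_K^K}\sqrt{\frac{2\pi}{b(r_K)}} = \frac{f(r_K)}{r_K^K\sqrt{2\pi b(r_K)}}\ ,
\ee
which is exactly \eqref{Hayman1}. Here the cutoff $\delta=\delta(r_K)$ must shrink slowly enough that $b(r_K)\delta^2\to\infty$ (so that extending the Gaussian to $\pm\infty$ is free) yet fast enough that the cubic error $b(r_K)\delta^3\to0$ (so that truncating at the quadratic term is legitimate); both are arrangeable because admissibility forces $b(r_K)\to\infty$.

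The main obstacle is the tail estimate. One must show that the portion of the contour with $\delta\le|\theta|\le\pi$ contributes negligibly against the central Gaussian. This does not follow from the local expansion and is exactly what the admissibility hypotheses are built to supply: the locality condition controls the integrand on $|\theta|\le\delta$, while the decay condition $f(re^{i\theta})=o\!\big(f(r)/\sqrt{b(r)}\big)$ handles the complementary arc. For the entire functions appearing here these conditions can be checked directly; granting them, the remaining work---assembling the central and tail pieces and tracking the error bounds---is routine and delivers the stated asymptotic equivalence.
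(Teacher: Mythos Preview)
The paper does not prove Hayman's theorem at all; it is quoted verbatim from the literature \cite{MR2172781} and used as a black box inside the proof of Lemma~\ref{Fdirect}. Your outline is the standard saddle-point argument underlying Hayman's result and is correct as a sketch: choose the Cauchy radius so that the linear term in $\theta$ vanishes, match the central arc to a Gaussian, and kill the tails using the admissibility hypotheses. So there is nothing to compare against here --- you have supplied a proof where the paper supplies only a citation.

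One small caution: the parenthetical in the paper, ``for the case at hand this reduces to $f(z)$ being an entire function'', is a bit loose, and you are right to flag that admissibility is more than entirety. Hayman's actual conditions are (i) a locality condition ensuring the quadratic expansion controls $|\theta|\le\delta(r)$, and (ii) a decay condition $|f(re^{i\theta})|=o\big(f(r)/\sqrt{b(r)}\big)$ uniformly on $\delta(r)\le|\theta|\le\pi$, for some suitable $\delta(r)\to0$. For the functions $e^{g^{d-1}(z)}$ relevant to Lemma~\ref{Fdirect} these are known to hold (iterated exponentials are the prototypical Hayman-admissible functions), but entirety alone does not suffice in general.
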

Introduce a new variable $r =: \log^{d-1} x$. Defining $g(z):= e^z-1$, 
we have $a(r)=r g^{d-1}(r)' = r e^r e^{g^1(r)}\cdots e^{g^{d-2}(r)}$.
We are only interested in the behavior for large $K$, which means that we can assume that $r$ is large, and only keep the leading contribution. In particular we can approximate $g(r) \simeq e^r$, so that 
\bea
 a(x_K)&=&K \Leftrightarrow \\
 x_K \log(x_K)\cdot\log(\log(x_K))\cdot\log(\log(\log(x_K)))\cdots \log^{(n-1)^{\star}}(x_K)\simeq&& K \\ \nonumber \Leftrightarrow x_K\simeq \frac{K}{\log^{(n-1)^{\star}}(K)\cdot\log^{(n-2)^{\star}}(K)\cdots \log(K)}
\eea
giving
\be
r_K \simeq \log^{d-1} K\ .
\ee
Plugging this into (\ref{Hayman1}), we see that the leading contribution is given by just $r_K^K$,
\be
a_K \simeq r^{-K}_K \simeq  (\log^{d-1} K)^{-K}\ .
\ee
Taking into account the factor $K!$, we thus obtain
\be
\log(F^{S^{d}_\wr}_K) \simeq K\log(K)- K - K\log^{d}(K)
\ee

\end{proof}

\subsection{Direct product}
To compute the $F_K^{S^d_\times}$, we use the fact that $F^{S*}_K=B_K$, where $B_K$ are the Bell numbers, whose asymptotic behavior is given by \cite{MR671583}
\be
\log B_K \simeq K\log K - K\log\log K - K \ .
\ee
To get $F^*$ for the direct product, we can use (\ref{Fstardirect}) to find $F^{S^d_\times*}=B_K^d$. 
We can then immediately write 
\be \label{FK_Sd}
F_K^{S^d_\times}= \sum_{n=0}^KS_1(K,n)B_n^d \ .
\ee
(Alternatively we could have tried to use (\ref{dPcycle}) and (\ref{Fgen}).)
We now want to estimate the asymptotic behavior of this sum, which will lead us to 
\begin{prop}\label{conj:Fdirect}For $d\geq2$,
	$F_K^{S^d_\times}\approx e^{d K\log(K)}$ .
\end{prop}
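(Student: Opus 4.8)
The plan is to bracket $\log F_K^{S^d_\times}$ between matching bounds of the form $dK\log K\,(1-o(1))$, working from the orbit interpretation of $F_K$ and $F^\star_K$ rather than trying to evaluate the alternating sum \eqref{FK_Sd} directly. An orbit of $S^d_\times$ on ordered $K$-tuples of $X^d$ (with $X$ countably infinite) is recorded by a $d$-tuple $(\pi_1,\dots,\pi_d)$ of set partitions of $\{1,\dots,K\}$, where $\pi_a$ groups the indices with equal $a$-th coordinate; every such $d$-tuple occurs, the tuple consists of $K$ distinct points exactly when $\pi_1\wedge\cdots\wedge\pi_d$ is the partition into singletons, and hence $F^\star_K(S^d_\times)=B_K^d$ (consistent with iterating \eqref{Fstardirect} and $F^\star_K(S)=B_K$) while $F_K^{S^d_\times}$ counts those $d$-tuples with trivial meet. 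In particular $F_K^{S^d_\times}\le F^\star_K(S^d_\times)=B_K^d$ --- this also drops out of \eqref{eq:stirling}, whose $n=K$ term alone is $F_K$ and whose remaining terms are non-negative --- so $\log B_K\simeq K\log K-K\log\log K-K$ gives $\limsup_K \log F_K^{S^d_\times}/(dK\log K)\le 1$.

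\textbf{Lower bound.} I would do $d=2$ first and then bootstrap. Here $F_K^{S^2_\times}$ is the number of pairs of partitions of $[K]$ with trivial meet. Set $t:=\lceil\log K\rceil$ and $m:=\lfloor K/t\rfloor$, and restrict the count to those $\pi_1$ all of whose blocks have size $t$ (allowing block sizes to differ by $1$ when $t\nmid K$); a direct estimate gives $e^{(1-o(1))K\log K}$ of these. Fixing such a $\pi_1$ and labelling the elements within each block by $1,\dots,t$ identifies $[K]$ with $[m]\times[t]$ so that $\pi_1$ is ``equal first coordinate''. Each slice $[m]\times\{i\}$ then contains exactly one element of every $\pi_1$-block, so pushing an arbitrary partition of $[m]$ forward onto that slice yields a partition all of whose blocks meet each $\pi_1$-block at most once. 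Taking $\pi_2$ to be the union over $i\in[t]$ of one such slice-partition produces at least $B_m^{\,t}$ distinct admissible $\pi_2$ (the map from $(\sigma_1,\dots,\sigma_t)$ to $\pi_2$ is injective, since $\sigma_i$ is read off from the blocks of $\pi_2$ lying in slice $i$), and $\log B_m^{\,t}=t\log B_m=(1-o(1))K\log K$ because $\log B_m\sim m\log m\sim K/t$. Hence $F_K^{S^2_\times}\ge e^{(2-o(1))K\log K}$. For $d\ge 2$, if $\pi_1\wedge\pi_2$ is trivial then so is $\pi_1\wedge\cdots\wedge\pi_d$ for any $\pi_3,\dots,\pi_d$, whence $F_K^{S^d_\times}\ge F_K^{S^2_\times}\,B_K^{\,d-2}$; with $\log B_K=(1-o(1))K\log K$ this gives $\liminf_K \log F_K^{S^d_\times}/(dK\log K)\ge 1$, and combined with the upper bound completes the proof.

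\textbf{Main obstacle.} The real content is the $d=2$ lower bound, i.e.\ exhibiting $e^{(1-o(1))K\log K}$ partitions $\pi_2$ that are ``rainbow'' with respect to a fixed fat $\pi_1$. The pitfall is that the obvious local constructions --- forcing $\pi_2$ to refine $\pi_1$ blockwise, or chopping the $\pi_1$-blocks into small groups and taking Latin-square-type decompositions inside each group --- only give $e^{O(K\log\log K)}$ such $\pi_2$, which is short by a full factor of $\log K$ in the rate of the exponent and leads merely to the weaker bound $F_K^{S^d_\times}\gtrsim e^{(d-1)K\log K}$ (which is, however, already enough for Proposition~\ref{direct}). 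The slicing construction is precisely the device that lets the blocks of $\pi_2$ range over all of $[K]$ and so recovers the missing $K\log K$. The alternative of a saddle-point evaluation of \eqref{FK_Sd} would instead have to control the cancellation between the $S_1(K,n)$ of opposite sign, which the orbit-counting argument avoids entirely.
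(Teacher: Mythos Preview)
Your argument is correct and follows a genuinely different route from the paper's. The paper proceeds analytically: starting from $F_K^{S^d_\times}=\sum_n S_1(K,n)B_n^d$, it substitutes the Dobi\'nski series $B_n=e^{-1}\sum_{r\ge0}r^n/r!$ together with $\sum_n S_1(K,n)t^n=(t)_K$ to collapse the alternating sum into
\[
F_K^{S^d_\times}=e^{-d}\sum_{r_1,\dots,r_d\ge0}\frac{(r_1\cdots r_d)_K}{r_1!\cdots r_d!}\,,
\]
and then invokes Theorem~3 of Pittel (\emph{Where the typical set partitions meet and join}, 2000) to conclude that this expression and $B_K^d$ have the same logarithmic growth. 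The hard step --- controlling the sign cancellations in $S_1(K,n)$ --- is thus outsourced to an external reference.

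You instead work directly with the orbit interpretation: $F_K^{S^d_\times}$ counts $d$-tuples of partitions of $[K]$ with trivial meet, the upper bound $F_K\le F^\star_K=B_K^d$ is immediate, and the lower bound comes from the explicit two-scale construction (a ``fat'' $\pi_1$ with blocks of size $\sim\log K$, then $\pi_2$ assembled slice-by-slice from $B_m^{\,t}$ independent choices, bootstrapped to $d\ge2$ via $F_K^{S^d_\times}\ge F_K^{S^2_\times}B_K^{d-2}$). The estimates you quote for the number of equipartitions and for $t\log B_m$ are correct, and the injectivity of $(\sigma_1,\dots,\sigma_t)\mapsto\pi_2$ is clear since each $\sigma_i$ is the restriction of $\pi_2$ to its slice. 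Your proof is longer but entirely self-contained, and it makes visible the underlying reason the meet constraint is asymptotically free --- which is precisely the probabilistic statement Pittel proves and the paper cites. The paper's route is compact but opaque without the reference; yours trades brevity for a transparent combinatorial mechanism and avoids the alternating-sum issue altogether.
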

\begin{proof}
To prove this we rewrite (\ref{FK_Sd}) using the series expansion of the Bell numbers 
\be
B_n= \frac{1}{e}\sum_{r\geq 0}\frac{r^n}{r!}\;, 
\ee
as well as the expression for the generating function of the Stirling numbers of the first kind \cite{MR2172781},
\bea
\quad 
\sum_{n=0}^K S_1(K,n)t^n= (t)_K := t(t-1)\cdots(t-K+1)\ ,
\eea
so that we have
\bea
F_K^{S^d_\times}=\sum_{n=0}^KS_1(K,n)B_n^d &= e^{-d}\sum_{n=0}^K\sum_{r_1,r_2,...r_d\geq 0}S_1(K,n)\frac{r_1^nr_2^n\cdots r_d^n}{r_1!r_2!\cdots r_d!}\\
&=e^{-d}\sum_{r_1,r_2,...r_d\geq 0}\sum_{n=0}^KS_1(K,n)\frac{(r_1r_2\cdots r_d)^n}{r_1!r_2!\cdots r_d!}\\
&=e^{-d}\sum_{r_1,r_2,...r_d\geq 0}\frac{(r_1r_2\cdots r_d)_K}{r_1!r_2!\cdots r_d!}
\label{FKalternate}
\eea
The theorem then follows immediately from Theorem 3 in \cite{Pittel2000}, which shows that (up to the order we are interested in) the right hand side of (\ref{FKalternate}) and $B_K^d$ grow at the same rate . 
\end{proof}

\section{Partition functions}
\subsection{Untwisted sector}\label{ss:ut}
The untwisted sector is given by the cycle index evaluated for arguments corresponding to the partition function,
\bea\label{Polya}
Z^u_{G_N}(\tau) &=& \chi(G_N; Z(\tau),\dots, Z(N\tau)) \cr
&=& {1\over |G_N|}\sum_{\bf j}A_{\bf j} Z(\tau)^{j_1}\dots Z(N\tau)^{j_N}\ .
\eea
Note that here we can take either $Z$ or $\tilde Z$. Since all the coefficients are positive, we can estimate a lower bound on the number of states by estimating
\be
\tilde Z(\tau) \geq 1 + \rho(\Delta/K)q^{\Delta/K}\ , \qquad Z(j\tau) \geq 1\ .
\ee
Here $\geq$ is understood to mean that all Fourier coefficients  of the two expressions satisfy the inequality.
Using (\ref{Fgen}) we then get a contribution to the term $q^{\Delta}$ of
\be
\frac{F_K}{K!} \rho(\Delta/K)^K\ .
\ee
We can now prove propositions~\ref{wreath} and (under the assumption that conjecture~\ref{conj:Fdirect} is true) proposition~\ref{direct} as straightforward corollaries.
\begin{proof}[Proof of Proposition~\ref{wreath}]
Assuming that $\Delta\gg K$, we can use the Cardy formula, $\rho(\Delta/K)\approx e^{\sqrt{4b\Delta/K}}$, where $b= \pi^2c/6$, with $c$ the central charge of the seed theory, giving
\be\label{Cardy}
\frac{F_K e^{\sqrt{4b K\Delta}}}{K!}\ .
\ee
Plugging in (\ref{direct}) we get 
\begin{align}
\frac{1}{K!}e^{\sqrt{4bK\Delta} + K\log(K)- K - K\log^{d}(K)}\approx e^{\sqrt{4bK\Delta}- K\log^{d}(K)}
\end{align}
for $K$- tuples. We want to choose $K$ in such a way to maximize the contribution to $\rho_{\infty}(\Delta)$. This gives
\begin{align}
\frac{\textit{d}}{\textit{d}K}\bigg|_{K=K^{\star}}\sqrt{4bK\Delta}- K\log^{d}(K)= 0 \Leftrightarrow K^{\star}\sim \frac{b\Delta}{\left(\log^{d}(b\Delta)\right)^2}
\end{align}
Choosing $\Delta\gg 1$, we see that the conditions for (\ref{Cardy}) are indeed satisfied.
We thus get a contribution to the growth of the form
\begin{align}
\rho_{S^d_\wr}(\Delta)\gtrsim e^{\frac{b\Delta}{\log^{d}(b\Delta)}}.
\end{align}
\end{proof}
\begin{proof}[Proof of Proposition~\ref{direct}]
Pick $\Delta_1$ to be the weight of the lightest state in the theory, so that $\rho(\Delta_1)\geq 1$. For $\Delta = K \Delta_1$ we thus have
\be
\rho_{S^d}(\Delta) \gtrsim \frac{F_K}{K!} \approx e^{(d-1)\frac{\Delta}{\Delta_1}{\log \Delta/\Delta_1}}\ .
\ee 
\end{proof}

\subsection{Twisted sector}
Let us now include the contribution of the twisted sectors, that is consider the entire expression (\ref{Bantay}). This is a much harder problem. In some special cases, we can find explicit expressions, such as in the symmetric case $G_N =S_N$: Here the orbifold partition function can be obtained from the cycle index in terms of Hecke operators $T_L$,
\be
Z_{S_N} = \chi_{S_N}(T_1 Z,T_2 Z,\ldots , T_N Z)\ .
\ee
(There is in fact a closed form expression for the generating function of the $Z_{S_N}$\cite{Dijkgraaf:1996xw}.)
The total expression is thus a polynomial in `modular invariant blocks', \ie terms of the form $T_LZ(\tau)$.
For general permutation orbifolds, we can try to mimic this behavior. For this purpose we introduce generalized blocks by defining the operators $\T_{\vec a}$
\be\label{genHecke}
\T_{\vec a}Z(\tau) = \sum_{\gamma\in \Gamma_{\vec p,\vec q}\backslash SL(2,\ZZ)} Z(a_1\gamma(\tau))\cdots Z(a_k\gamma(\tau))\ , \qquad a_i=:\frac{p_i}{q_i}\ , \ \gcd(p_i,q_i)=1\ .
\ee
Here $\Gamma_{\vec p,\vec q}$ is the common (right-)stabilizer of the matrices 
\be
\left(\begin{array}{cc}p_i&0\\0&q_i\end{array}\right) \in SL(2,\ZZ)\backslash M_{p_iq_i}
\ee 
for all $i$, where $M_m$ are the $2\times2$ integer matrices with determinant $m$. That is, it is the group that leaves  the
expression $Z(a_1\tau)\cdots Z(a_k\tau)$ invariant after accounting for the modular invariance of $Z(\tau)$. It is of finite index since there are only finitely many elements in $SL(2,\ZZ)\backslash M_{p_iq_i}$, so that (\ref{genHecke}) is a finite sum. In particular (\ref{genHecke}) can be written as a sum over products of terms of the form $Z(\tilde\gamma(\tau))$ with $\tilde\gamma \in SL(2,\ZZ)\backslash M_{p_iq_i}$. We can then find the usual representative for $\tilde \gamma$ to write this as $Z((a\tau+b)/c),\ ac=p_iq_i, 0\leq b < c$, which explains the connection to (\ref{Bantay}).

The motivation for introducing these operators is to express the orbifold partition functions as combinations of such operators, which are of course individually modular invariant. In a sense $\T$ generalize the ordinary Hecke operators $T$: For $a\in \mathbb{N}$ prime, $\T_a = a T_a$. Note that even for a single integer $a$, the two differ if $a$ has a prime factorisation with exponents bigger than one: We have for instance $\T_4 = 4 T_4 -T_1$.

For cyclic orbifolds we can obtain the full result from modular transformations of the untwisted sector partition function. This means that we can express it in terms of the cycle index and operators $\T_{\vec a}Z$ with integer $\vec a$. In particular we have
\begin{prop}
\be
Z_{C_N}=\frac{1}{N}\sum_{d|N}\Phi(d)\T_{\vec{L}_d}Z(\tau)\ ,
\qquad \vec{L}_d = (d)^{N/d}:=\underbrace{(d,\ldots, d)}_{N/d}\ .
\ee
\end{prop}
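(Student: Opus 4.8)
The plan is to derive the cyclic orbifold partition function directly from the Bantay formula \eqref{Bantay} specialized to $G_N=C_N$, and show that it collapses to the claimed sum of generalized Hecke operators. First I would enumerate the commuting pairs $(h,g)$ in the abelian group $C_N$: since $C_N$ is abelian, every pair commutes, so the sum in \eqref{Bantay} runs over all $N^2$ pairs. Writing $g$ and $h$ as powers of a generator, I would organize the sum by the order of $g$, i.e.\ by the divisor $d\mid N$ giving the number of cycles; for a fixed such $g$ the cyclic action on $\{1,\dots,N\}$ has $\lambda_\xi=N/(\text{number of cycles})$ equal on all orbits, and the orbit data $(\lambda_\xi,\mu_\xi,\kappa_\xi)$ entering \eqref{Ztauchi} becomes uniform across orbits because $C_N$ acts transitively on its own cosets. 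This is exactly the structure that makes $Z_{(h,g)}$ a product of $Z$ evaluated at a single modified modulus raised to a uniform power, i.e.\ a term of the shape appearing in $\T_{\vec L_d}Z$.

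Next I would match the counting. For each divisor $d\mid N$, the elements $g$ of order dividing... more precisely the elements $g$ such that the cyclic group $\langle g\rangle$ has index-related parameter $d$ contribute, and the number of $h$ commuting and producing each coset representative $\tilde\gamma$ must reproduce the orbit count in the definition \eqref{genHecke} of $\T_{\vec L_d}$ with $\vec L_d=(d,\dots,d)$ ($N/d$ entries). The Euler totient $\Phi(d)$ arises as the number of generators of the cyclic subgroup of order $d$, which is the standard way $\Phi$ enters the cycle index $\chi_{C_N}=\frac1N\sum_{d\mid N}\Phi(d)s_d^{N/d}$ quoted earlier; I would make precise that the twisted-sector sum over $h$ for fixed $g$ runs over exactly the coset space $\Gamma_{\vec p,\vec q}\backslash SL(2,\ZZ)$ appearing in \eqref{genHecke}, using the fact established in the text that $\T_{\vec a}Z$ can be rewritten as a sum of $Z((a\tau+b)/c)$ with $ac=p_iq_i$, $0\le b<c$ — which is precisely the list of $\tau_\xi=(\mu_\xi\tau+\kappa_\xi)/\lambda_\xi$ produced by varying $h$ at fixed $g$. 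Assembling, the $\frac1{|C_N|}=\frac1N$ prefactor in \eqref{Bantay} survives untouched, the sum over $g$ of fixed type gives the $\Phi(d)$, and the sum over $h$ reconstitutes $\T_{\vec L_d}Z$, yielding the stated identity.

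An alternative, perhaps cleaner, route is the one hinted at in the surrounding text: start from the untwisted sector $Z^u_{C_N}=\chi_{C_N}(Z(\tau),\dots,Z(N\tau))=\frac1N\sum_{d\mid N}\Phi(d)Z(d\tau)^{N/d}$, which is the $g=1$ part, and then observe that the full $Z_{C_N}$ is obtained by summing the $SL(2,\ZZ)$-orbit of this expression — modular invariance forces the twisted sectors to be exactly the images of the untwisted blocks $Z(d\tau)^{N/d}$ under the finite coset $\Gamma_{\vec L_d}\backslash SL(2,\ZZ)$, which is by definition $\T_{\vec L_d}Z$. I would verify this is consistent with Bantay's prescription by checking that modular transformations of $Z(d\tau)^{N/d}$ generate precisely the $Z_{(h,g)}$ with $g$ of the corresponding type and no others, and that no overcounting occurs because $\Gamma_{\vec L_d}$ is exactly the stabilizer.

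The main obstacle I anticipate is the bookkeeping at the interface between the two descriptions of the same finite set: on the Bantay side one varies $(h,g)\in C_N\times C_N$ with $g$ fixed, and must show the resulting moduli $\tau_\xi$ range over a fundamental domain for $\Gamma_{\vec L_d}\backslash SL(2,\ZZ)$ \emph{without repetition and with the right multiplicity}, whereas on the Hecke side the sum is manifestly over that coset space. Reconciling the normalization — in particular confirming that the $N$ choices of $h$ for fixed $g$ distribute as $[SL(2,\ZZ):\Gamma_{\vec L_d}]$ distinct cosets each hit a uniform number of times, so that the combined prefactor is exactly $\frac1N\Phi(d)$ rather than something off by an index factor — is the delicate point, and I would handle it by an explicit computation of $\kappa_\xi$ and the stabilizer for the cyclic action, comparing directly with the definition of $\Gamma_{\vec p,\vec q}$ for $\vec p/\vec q=(d,\dots,d)$.
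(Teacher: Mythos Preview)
Your second route is the paper's approach, and your first route would also have to collapse into it; but both share a misconception about how the $SL(2,\ZZ)$ orbits sit inside the Bantay sum. You repeatedly organize by ``the type of $g$'' and expect that fixing $g$ (or its order) and summing over $h$ reproduces a single $\T_{\vec L_d}Z$. It does not: writing $(h,g)=(r^m,r^n)$ for a generator $r$, the transformation law $Z_{(h,g)}\mapsto Z_{(h^ag^b,h^cg^d)}$ mixes $m$ and $n$, so the modular orbit of an untwisted term contains pairs with $g$ of many different orders. Concretely, for $N$ prime and $g=r$, the sum $\sum_{h}Z_{(h,r)}=\sum_{b=0}^{N-1}Z((\tau+b)/N)$ equals $\T_{(N)}Z-Z(N\tau)$, not $\T_{(N)}Z$; the missing $Z(N\tau)$ lives in the $g=1$ sector. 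So the ``delicate bookkeeping'' you anticipate cannot be resolved by computing stabilizers at fixed $g$, because the underlying claim is false.

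The paper fixes this by identifying the correct $SL(2,\ZZ)$-invariant on pairs: it is $d'=\gcd(m,n,N)$, not the order of $g$. After first verifying the transformation law $Z_{(h,g)}\mapsto Z_{(h^ag^b,h^cg^d)}$ by an explicit computation of how $(\lambda_\xi,\mu_\xi,\kappa_\xi)$ behave under $S$ and $T$, the paper shows via B\'ezout's identity that $\{(r^m,r^n):\gcd(m,n,N)=d'\}$ is a \emph{single} $SL(2,\ZZ)$ orbit, namely the orbit of $(r^{d'},1)$, whose seed term is $Z((N/d')\tau)^{d'}$. The coefficient $\Phi(N/d')$ then appears as the number of untwisted representatives $(r^m,1)$ with $\gcd(m,N)=d'$ lying in that orbit; after relabeling $d=N/d'$ this is the $\Phi(d)$ in the statement. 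Once you replace ``$g$ of the corresponding type'' by ``$\gcd(m,n,N)=d'$'' throughout, your Route~2 becomes exactly the paper's proof.
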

\begin{proof}
	The $Z_{(h,g)}$  in (\ref{Ztauchi}) transform under $SL(2,\ZZ)$ as $Z_{(h,g)}\mapsto Z_{(h^ag^b,h^cg^d)}$. To see this fix an orbit $\xi$. For the $S$ transformation, first note that $\tilde\lambda$, the size of the $h$ orbit in $\xi$ is given by $\tilde\lambda = \frac{\mu\lambda}{\gcd(\lambda,\kappa)}$, as follows from the fact that $h^\mu = g^\kappa$ on $\xi$. We then immediately have $\tilde\mu = |\xi|/\tilde\lambda=\gcd(\lambda,\kappa)$. Finally $\tilde \kappa$ is given by a solution of $\tilde\kappa \kappa + k\lambda = \gcd(\lambda,\kappa)$. A straightforward computation shows that indeed $Z((-\mu/\tau+\kappa)/\lambda) = Z((\tilde\mu\tau+\tilde\kappa)/\tilde\lambda)$. For the $T$ transformation simply note that $\kappa$ shifts by $\mu$, as $h\to hg$, which is of course compatible with the behavior for $\tau\to\tau+1$.

	Let $r$ be a generator for the cyclic group $C_N$. The sum then runs over $(r^m,r^n), n,m=1,\ldots,N$.	
	Let us start by considering the term $Z_{(r,1)}=Z(N\tau)$. There are $\Phi(N)$ such terms. Under $SL(2,\ZZ)$ transformations, we obtain exactly all the terms $Z_{(r^m,r^n)}$ with $\gcd(m,n,N)=1$. This can be seen by the following argument: If $\gcd(m,n)=1$, using B\'ezout's identity we can always find an $SL(2,\ZZ)$ matrix 
	with $a=m$ and $c=n$. If $\gcd(m,n)>1$, first transform to $Z_{(r^{\gcd(m,n)},1)}$ by the matrix with $a=\gcd(m,n)$ and $c=N$, which exists since $\gcd(m,n,N)=1$. Then transform with the matrix $a=m/\gcd(m,n), c=n/\gcd(m,n)$. Conversely, for any transformation we have $a = m + kN$ and $c= n + k'N$, so if $\gcd(m,n,N)>1$, then $\gcd(a,c)>1$, so that no corresponding matrix exists.
	
	Next for any $d|N$ consider the term $Z_{(r^d,1)}=Z(N\tau/d)^d$. There are $\Phi(N/d)$ such terms, and the orbit is given by all terms with $\gcd(m,n,N)=d$. To see this repeat the above argument after dividing everything by $d$. Summing over all $d|N$ accounts for all terms.
\end{proof}

Similarly, we find that more complicated permutation orbifolds can also be expressed in terms of such generalized Hecke operators. It is however no longer possible to write all the terms as modular transforms of the untwisted sector. This means that we will also have terms of the form $\T_{\vec a}Z$ where some of the $a_i$ are fractions rather than integers.

\subsection{Example: Orbifolds of the $E_8\times E_8\times E_8$ theory}
For illustration let us compute the orbifold for some subgroups of $S_{16}$. For concreteness we will orbifold the chiral theory given by the Niemeier lattice $E_8^3$, which has central charge 24 and partition function
\be
Z_{E_8^3}(\tau)= j(\tau) = q^{-1} + 744 + 196883 q +\ldots\ .
\ee
To give an impression of the contributions of the twisted sectors, we have computed the orbifold for the following six groups: 
\begin{align}
|S_{16}|=16!& & |S_4 \wr S_4| = (4!)^5
& & |S_2\wr S_2\wr S_2\wr S_2| = (2!)^{15}\\
|S_4 \times S_4| = (4!)^2&& |S_2\times S_2\times S_2 \times S_2| = (2!)^4
&&|\mathbf{1}| = 1
\end{align}
The actual computation of these orbifolds is straightforward, although in some cases quite tedious. Our strategy is to express (\ref{Bantay}) as a polynomial in blocks of the form $\T_{\vec a}Z$. For each monomial we compute the polar terms, which allows us to write the result as a polynomial in $j$, from which we can very efficiently extract as many non-polar coefficients as we want. Symmetric orbifolds can be efficiently computed using (\ref{genHecke}). Wreath products can be computed as iterated orbifolds, that is the permutation orbifold $G\wr H$ is given by orbifolding the $H$ permutation orbifold by $G$. Direct product orbifolds however we needed to evaluate from (\ref{Bantay}) directly. In the cases at hand we find
\be
Z_{S_2\times S_2\times S_2\times S_2}(\tau)=\frac{1}{2^4}\left(\T_{(1)^{16}}Z +15 \T_{(2)^{8}}Z + 210 \T_{(1)^4}Z \right)
\ee
and
\begin{align}
Z_{S_4\times S_4}(\tau)= \frac{1}{(4!)^2}\bigg[&\T_{(1^{16})}Z(\tau) + 12\T_{(1^8, 2^4)}Z(\tau) + 36\T_{(1^4, 2^6)}Z(\tau) + 51\T_{(2^8)}Z(\tau) +\\ \nonumber
&16\T_{(1^4, 3^4)}Z(\tau) + 64\T_{(1, 3^5)}Z(\tau) +
156\T_{(4^4)}Z(\tau) + 96\T_{(4, 12)}Z(\tau) + \\ \nonumber 
&96\T_{(1^2, 2, 3^2,6)}Z(\tau) + 48\T_{(2^2, 6^2)}Z(\tau) + 144\T_{(1^2, 2^4)}Z(\tau)+ 288\T_{(2, 4^2)}Z(\tau)+ \\ \nonumber 
&72\T_{(1^3, 2^2)}Z(\tau) +504\T_{(2^2)}Z(\tau)+ 750\T_{(1^4)}Z(\tau) + 864\T_{(1)}Z(\tau)+ 96\T_{(1, 3)}Z(\tau)+\\�\nonumber
&12\T_{\left(\left(\frac{1}{2}\right)^4, 2^4\right)}Z(\tau) + 72\T_{\left(\left(\frac{1}{2}\right)^4, 1, 2^2\right)}Z(\tau) + 72\T_{\left(\left(\frac{1}{2}\right)^2, 1^5, 2^2\right)}Z(\tau) +\\�\nonumber
&96\T_{\left(\left(\frac{1}{2}\right),(\left(\frac{3}{2}\right)^4), 2, 6\right)}Z(\tau) + 108\T_{\left(\left(\frac{1}{2}\right)^2, 1^2, 2^2\right)}Z(\tau) + 128\T_{\left(\left(\frac{1}{3}\right),1^2, 3\right)}\bigg]
\end{align}
We have plotted them in figure~\ref{f:rho}. For $S_{16}$ we can clearly see the behavior (\ref{symgrowth}) up to approximately $\Delta \sim 2c = 32$. At that point Cardy behavior starts, and all orbifolds converge to the same number of states.
For clarity we have also plotted $\rho(\Delta)$ relative to the symmetric orbifold density $\rho_{S_{16}}(\Delta)$ in figure~\ref{f:rhorel}. As expected from intuition, the order of the group is a good indicator for how powerful an orbifold is: In our examples, the bigger the order of the orbifold group, the fewer states the orbifolded theory has. This is of course obvious in the untwisted sectors, but our examples indicate that it also holds when one includes the twisted sectors.

\begin{figure}
	\begin{center}
\includegraphics[width=.7\textwidth]{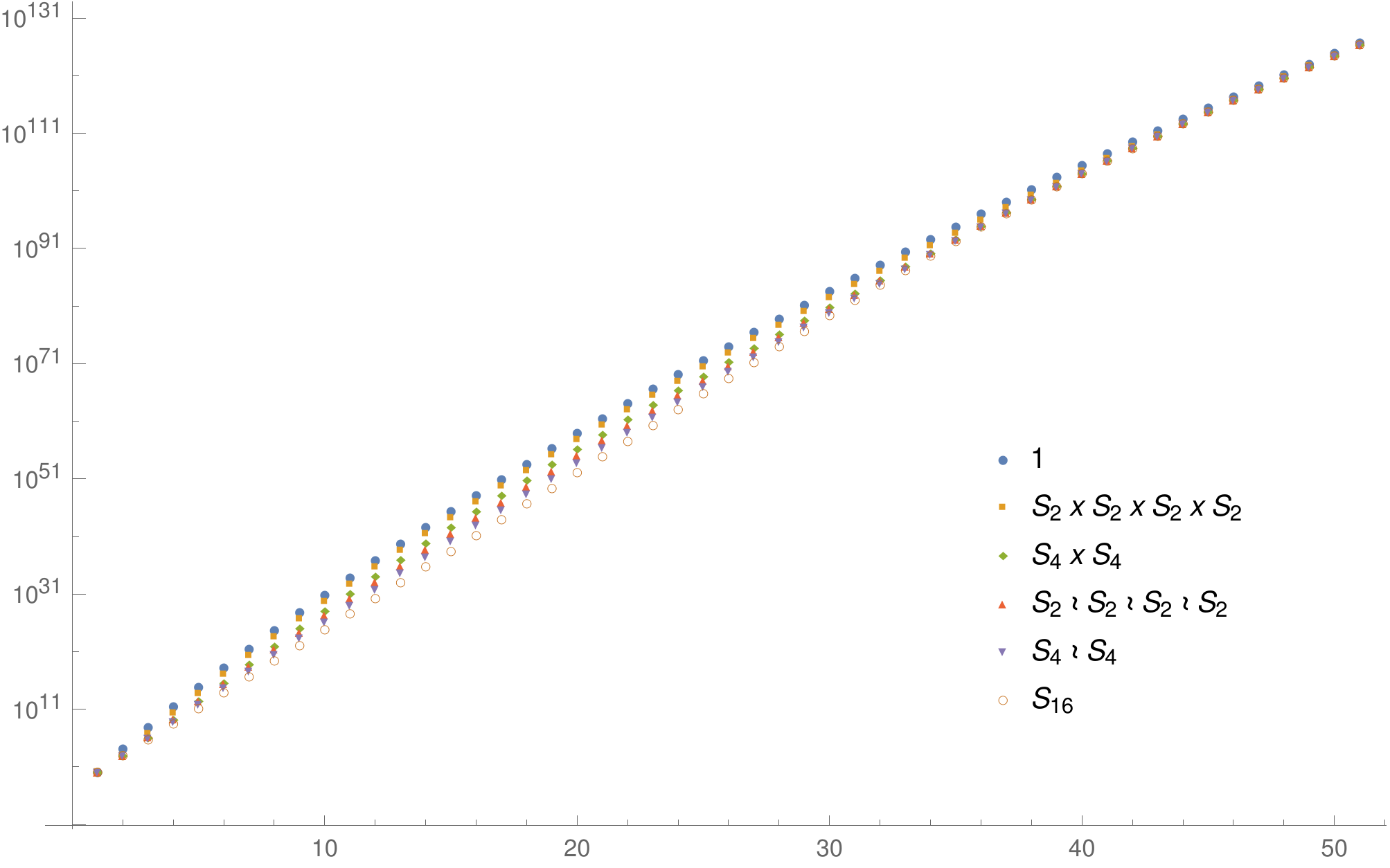}
	\end{center}
\caption{$\rho_G(\Delta)$ for various permutation orbifolds.}
\label{f:rho}
\end{figure}

\begin{figure}
	\begin{center}
	\includegraphics[width=.7\textwidth]{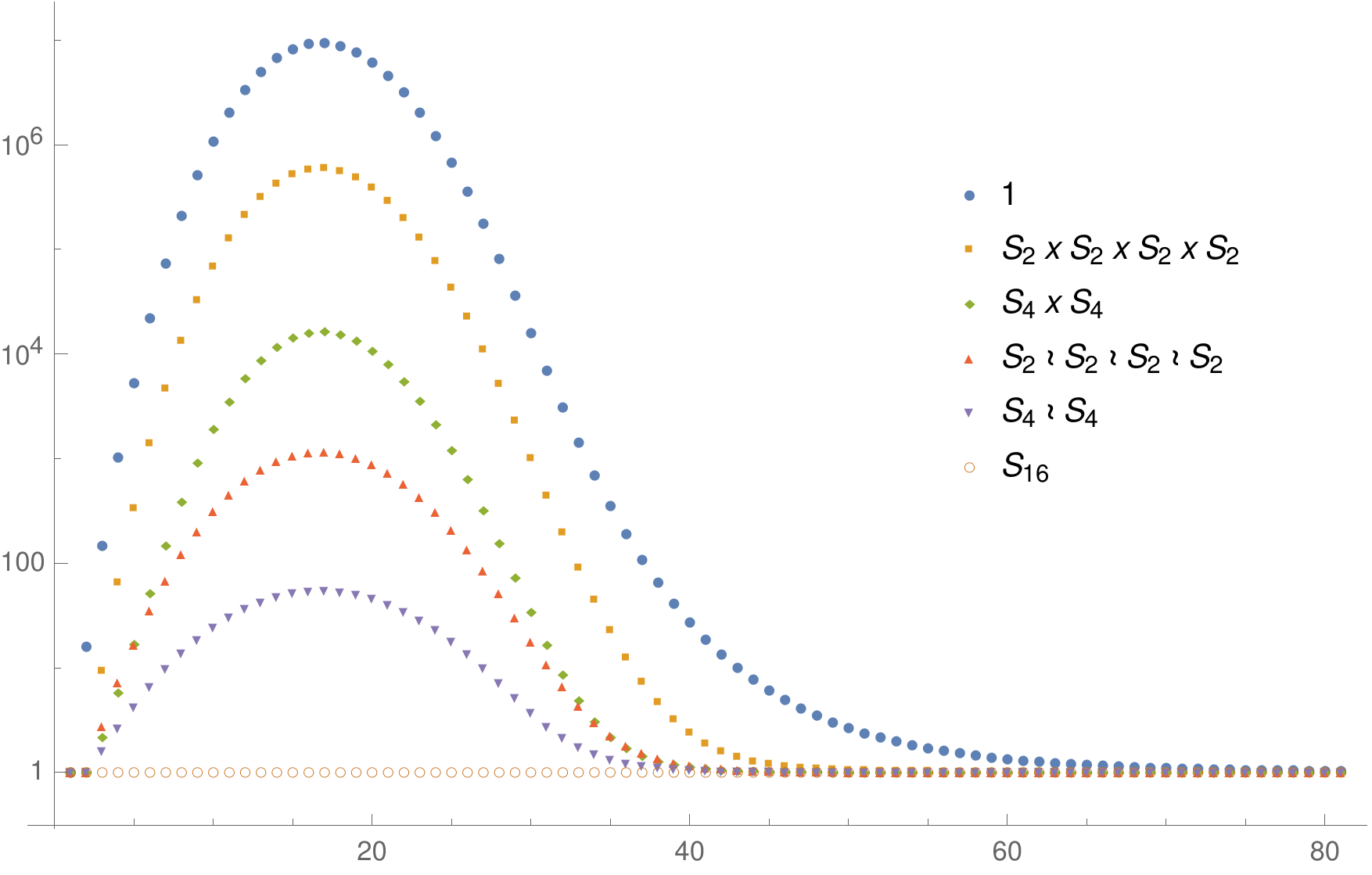}
	\end{center}
	\caption{$\rho_G(\Delta)/\rho_{S_{16}}(\Delta)$ for various permutation orbifolds.}
	\label{f:rhorel}
\end{figure}

\section*{Acknowledgments}
This work is partly based on the master thesis of one of us (BJM). CAK thanks the Harvard University High Energy Theory Group for hospitality. This work was performed in part at Aspen Center for Physics, which is supported by National Science Foundation grant PHY-1607611. CAK is supported by the Swiss National Science Foundation through the NCCR SwissMAP.

\bibliographystyle{utphys.bst}
\bibliography{ref}
\end{document}